\begin{document}

\title{Combining All Pairs Shortest Paths and All Pairs Bottleneck Paths Problems\thanks{This research was supported by the EU/NZ Joint Project, Optimization and its Applications in Learning and Industry (OptALI).}}

\author{Tong-Wook Shinn \and Tadao Takaoka}
\institute{Department of Computer Science and Software Engineering\\
University of Canterbury\\
Christchurch, New Zealand}

\maketitle

\begin{abstract}
We introduce a new problem that combines the well known All Pairs Shortest Paths (APSP) problem and the All Pairs Bottleneck Paths (APBP) problem to compute the shortest paths for all pairs of vertices for all possible flow amounts. We call this new problem the All Pairs Shortest Paths for All Flows (APSP-AF) problem. We firstly solve the APSP-AF problem on directed graphs with unit edge costs and real edge capacities in $\tilde{O}(\sqrt{t}n^{(\omega+9)/4}) = \tilde{O}(\sqrt{t}n^{2.843})$ time, where $n$ is the number of vertices, $t$ is the number of distinct edge capacities (flow amounts) and $O(n^{\omega}) < O(n^{2.373})$ is the time taken to multiply two $n$-by-$n$ matrices over a ring. Secondly we extend the problem to graphs with positive integer edge costs and present an algorithm with $\tilde{O}(\sqrt{t}c^{(\omega+5)/4}n^{(\omega+9)/4}) = \tilde{O}(\sqrt{t}c^{1.843}n^{2.843})$ worst case time complexity, where $c$ is the upper bound on edge costs.
\end{abstract}

\section{Introduction}
\label{sec:intro}
Finding the shortest paths between pairs of vertices in a graph is one of the most extensively studied problems in algorithms research. The shortest paths problem is often categorized into the Single Source Shortest Paths (SSSP) problem, which is to compute the shortest paths between one source vertex to all other vertices in the graph, and the All Pairs Shortest Paths (APSP) problem, which is to compute the shortest paths between all possible pairs of vertices on the graph.

Arguably the most famous algorithm for the APSP problem is Floyd's algorithm that runs in $O(n^{3})$ time. There have been many attempts at providing sub-cubic time bounds for solving the APSP problem on dense graphs with real edge costs \cite{Fredman,Dobosiewicz,Tad04,Zwick06,Han,Chan,HT}, all achieving time improvements by logarithmic factors. The current best time bound is $O(n^{3}\log{\log{n}}/\log^{2}{n})$ by Han and Takaoka \cite{HT}. If the graph has integer edge costs, faster matrix multiplication over a ring \cite{Strassen} can be utilized to achieve deeply sub-cubic time bounds. Alon, Galil and Margalit achieved $O(n^{(3+\omega)/2})$ time bound for solving the APSP problem on directed unweighted graphs, where $O(n^{\omega})$ is the time bound on multiplying two $n$-by-$n$ matrices over a ring \cite{AGM}. This time complexity translates to $O(n^{2.687})$ with $\omega < 2.373$ \cite{Williams}. The best time bound for this problem is currently $O(n^{2.530})$ by Zwick \cite{Zwick02}, thanks to Le Gall's recent achievement in rectangular matrix multiplication \cite{Gall}.

Another well studied problem in graph theory is finding the maximum bottleneck between pairs of vertices. The bottleneck of a path is the minimum capacity of all edges on the path. The problem of finding the paths that give the maximum bottlenecks for all pairs of vertices is formally known as the All Pairs Bottleneck Paths (APBP) problem. Vassilevska, Williams and Yuster achieved $O(n^{2+\omega/3}) = O(n^{2.791})$ time bound for solving the APBP problem on graphs with real edge capacities \cite{VWY}, and this has subsequently been improved to $O(n^{(\omega+3)/2}) = O(n^{2.687})$ by Duan and Pettie \cite{DP}.

Let us consider a path that gives us the bottleneck value of $b$ from vertex $u$ to vertex $v$. In other words, we can push flows of amounts up to $b$ from $u$ to $v$ using this path. If the flow demand from $u$ to $v$ is less than $b$, however, there may be a shorter path. This information is useful if we wish to minimize the path cost (distance) for varying flow demands. Thus we combine the two well known APSP and APBP problems and compute the shortest paths for all pairs for all possible flow demands. We call this new problem the All Pairs Shortest Paths for All Flows (APSP-AF) problem. Note that this is different from the All Pairs Bottleneck Shortest Paths (APBSP) problem \cite{VWY}, which is to compute the bottlenecks of the shortest paths for all pairs. There are obvious practical applications for the APSP-AF problem in any form of network analysis, such as computer networks, transportation and logistics, etc.

In this paper we present two algorithms for solving the APSP-AF problem on directed graphs with positive integer edge costs and real edge capacities. Firstly we present an algorithm to solve the problem on graphs with unit edge costs in $O(\sqrt{t}n^{(\omega+9)/4}) = O(\sqrt{t}n^{2.843})$ time, where $t$ is the number of distinct edge capacities. We then extend this algorithm to solve the problem on graphs with positive integer edge costs of at most $c$ in $O(\sqrt{t}c^{(\omega+5)/4}n^{(\omega+9)/4}) = O(\sqrt{t}c^{1.843}n^{2.843})$ time, which is reduced to the complexity of the first algorithm when $c=1$.

\section{Preliminaries}
\label{sec:prem}
Let $G=\{V,E\}$ be a directed graph with non-negative integer edge costs and real edge capacities. Let $n = |V|$ and $m = |E|$. Vertices (or nodes) are given by integers such that $\{1,2,3,...,n\} \in V$. Let $(i,j)$ denote the edge from vertex $i$ to vertex $j$. Let $cost(i,j)$ denote the cost and $cap(i,j)$ denote the capacity of the edge $(i,j)$. Let $t$ be the number of distinct $cap(i,j)$, and let $c$ be the upper bound on $cap(i,j)$. We define path \emph{length} as the number of edges on the path, irrespective of their costs or capacities. We define path \emph{cost} or \emph{distance} as the sum of all edge costs on the path.

We represent $G$ in a series of matrices. Let $R^{\ell} = \{r^{\ell}_{ij}\}$ be the reachability matrix, for $0 < \ell < n$, where $r^{\ell}_{ij} = 1$ if $j$ is reachable from $i$ via some path of length up to $\ell$ and $r^{\ell}_{ij} = 0$ otherwise. $r^{\ell}_{ii} = 1$ for all $\ell$. $r^{1}_{ij} = 1$ if an edge exists from $i$ to $j$, and 0 otherwise. $R^{1}$ is called the adjacency matrix of $G$. Let $C^{\ell}=\{c^{\ell}_{ij}\}$ be the capacity matrix, where $c^{\ell}_{ij}$ represents the maximum possible capacity (or bottleneck) from $i$ to $j$ via any paths of lengths up to $\ell$. $c^{\ell}_{ii} = \infty$ for all $\ell$. $c^{1}_{ij} = cap(i,j)$ if there is an edge from $i$ to $j$, and 0 otherwise. Let $D^{\ell}=\{d^{\ell}_{ij}\}$ be the distance matrix, where $d^{\ell}_{ij}$ represents the shortest possible distance from $i$ to $j$ via any paths of lengths up to $\ell$. $d^{\ell}_{ii} = 0$ for all $\ell$. $d^{1}_{ij} = cost(i,j)$ if there is an edge from $i$ to $j$, and $\infty$ otherwise.

Let $X \ast Y$ denote the $(min,+)$-product and $X \star Y$ denote the $(max,min)$-product of the two matrices $X$ and $Y$, where:
$$
X \ast Y = \min\limits_{k=1}^{n}\{x_{ik}+y_{kj}\} \quad X \star Y = \max\limits_{k=1}^{n}\{\min\{x_{ik}, y_{kj}\}\}
$$
\noindent Clearly the $(min,+)$-product is applicable to the distance matrix whereas the $(max,min)$-product is applicable to the capacity matrix.

\section{Review of the algorithm by Alon, Galil and Margalit}
\label{sec:agm}

Our algorithm for solving the APSP-AF problem is largely based on the algorithm given by Alon \emph{et al.} \cite{AGM}. Therefore we provide a review of this algorithm using the same set of terminologies as an earlier review of the same algorithm by Takaoka \cite{Tad95}. The algorithm under review computes the All Pairs Shortest Distances (APSD) on directed graphs with unit edge costs. In summary this algorithm achieves sub-cubic time bound by utilizing faster matrix multiplication over a ring to perform Boolean matrix multiplication, and also using the novel idea of \emph{Bridging Sets}.

\begin{algorithm}
\caption{Algorithm by Alon, Galil and Margalit}
\label{alg:agm}
\begin{algorithmic}
\algnotext{EndFor}
\algnotext{EndIf}
\algnotext{EndWhile}
\newcommand{\IfLine}[2]{%
    \State\algorithmicif\ {#1}\ \algorithmicthen\ {#2}%
}
\Statex{/* Acceleration Phase*/}
\For{$\ell = 2$ to $r$}
	\State{$R^{\ell} \leftarrow R^{\ell - 1} \times R^{1}$ /* Boolean matrix multiplication */}
	\For{$i \leftarrow 1$ to $n$; $j \leftarrow 1$ to $n$}
		\IfLine{$r^{\ell}_{ij} = 1$ and $d^{\ell - 1}_{ij} = \infty$}{$d^{\ell}_{ij} \leftarrow \ell$}
		\IfLine{$d^{\ell - 1}_{ij} < \ell$}{$d^{\ell}_{ij} \leftarrow d^{\ell - 1}_{ij}$}
	\EndFor
\EndFor

\mbox{}

\Statex{/* Cruising Phase*/}
\While{$\ell < n$}
	\State{$\ell' \leftarrow \lceil \frac{3\ell}{2} \rceil$}
	\For{$i \leftarrow 1$ to $n$}
		\State{Scan $i^{th}$ row of $D^{\ell}$ with $j$ and find the smallest set of equal $d^{\ell}_{ij}$ such that}
		\State{\mbox{ } \mbox{ } $\lceil \ell/2 \rceil \leq d^{\ell}_{ij} \leq \ell$ and let the set of corresponding $j$ be $S_{i}$}
	\EndFor
	\For{$i \leftarrow 1$ to $n$; $j \leftarrow 1$ to $n$}
		\State{$m_{ij} \leftarrow \min_{k \in S_{i}}\{d^{\ell}_{ik} + d^{\ell}_{kj}\}$ /* Squaring $D^{\ell}$ with $S_{i}$ */}
		\If{$d^{\ell}_{ij} \leq \ell$}
			\State{$d^{\ell'}_{ij} \leftarrow d^{\ell}_{ij}$}
		\ElsIf{$m_{ij} \leq \ell'$}
			\State{$d^{\ell'}_{ij} \leftarrow m_{ij}$}
		\EndIf
	\EndFor
	\State{$\ell \leftarrow \ell'$}
\EndWhile
\end{algorithmic}
\end{algorithm}

Algorithm \ref{alg:agm} consists of two phases. We refer to the first part of the algorithm as the \emph{acceleration phase}, and the second part of the algorithm as the \emph{cruising phase}. The acceleration phase repeatedly performs Boolean matrix multiplication with the adjacency matrix to compute APSD for all pairs with distances up to $\ell = r$, where $r$ is a constant such that $1 < r < n$. Clearly this only works on graphs with unit edge costs where the path length and the path cost are equivalent. The algorithm then switches to the cruising phase where the ordinary multiplication method is used with the help of bridging sets, $S_{i}$, where $S_{i}$ is a set of ``via'' vertices for all rows $i$ of the distance matrix $D$. That is, when computing $d^{\ell}_{ik} + d^{\ell}_{kj}$ for the $(min,+)$-product, we inspect only the set of vertices in $S_{i}$ for $k$ rather than inspecting all $O(n)$ elements. Alon \emph{et al.} have shown that with path lengths equal to $r$, the size of the bridging set $S_{i}$ for each row $i$ is bounded by $O(n/r)$ \cite{AGM}. Hence we start the cruising phase with $|S_{i}| = O(n/r)$ for each row $i$.

The acceleration phase takes $O(rn^\omega)$ time, and the cruising phase performs repeated squaring of the distance matrix in $O(n^{2}\cdot{}\frac{n}{r})$ time. Alon \emph{et al.} chose to increase the path length by a factor of $\frac{3}{2}$ in each iteration of the cruising phase. This factor of $\frac{3}{2}$ is somewhat arbitrary, as any factor greater than $1$ and less than $2$ can be used. Because the size of the bridging set decreases by a constant factor in each iteration, we end up with a geometric series if we add up the time complexities of each iteration, and hence the first squaring dominates the time complexity. The total time complexity of $O(n^{(3 + \omega)/2}) = O(n^{2.687})$ of this algorithm comes from balancing the time complexities of the two phases to retrieve the best value for $r$, that is, setting $rn^\omega = n^{2}\cdot{}\frac{n}{r}$ then solving for $r$.

\section{APSP-AF on graphs with unit edge costs}
\label{sec:apspaf}
We first consider solving the All Pairs Shortest Distances for All Flows (APSD-AF) problem on directed graphs with unit edge costs, that is, computing only the shortest distances rather than actual path. Path lengths and path distances are used interchangeably in this section. To re-iterate the APSD-AF problem, for each pair of vertices $(i,j)$ for each possible flow amount, we want to compute the shortest distance. Thus our aim here is to obtain a set of $(d,f)$ pairs for all pairs of vertices, where $f$ is the maximum flow amount that can be pushed through the shortest path whose length (distance) is $d$. We refer to the distinct capacity values as \emph{maximal flows}. i.e. there are $t$ maximal flows. Assume that the maximal flows are sorted in increasing order. If we wish to push $f$ such that $f_{1} < f < f_{2}$ for consecutive maximal flows $f_{1}$ and $f_{2}$, then clearly $f$ is represented by $f_{2}$.

\begin{figure}
\def\svgwidth{250pt}
\begin{center}
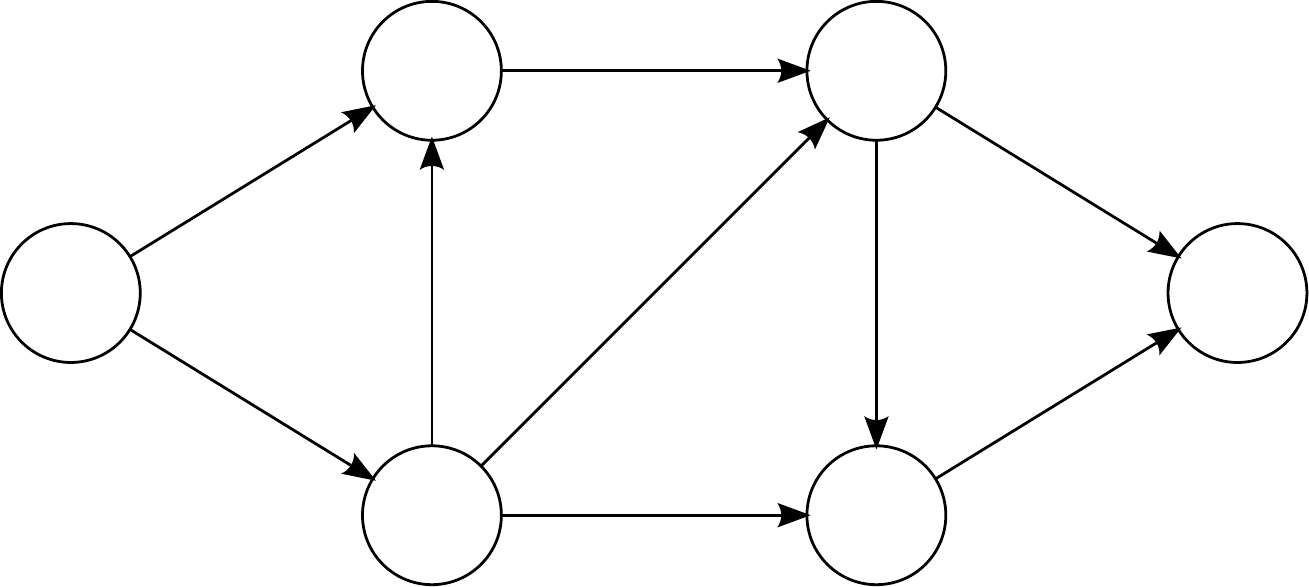
\end{center}
\caption{An example graph with $n=6$, $m=9$, $t=7$ and $c=3$. Numbers in the parenthesis beside each edge shows the edge cost and capacity, respectively.}
\label{fig:graph}
\end{figure}


Let $U$ be a matrix such that $u_{ij}$ is a set of $(d,f)$ pairs as described above. Let both $(d,f)$ and $(d',f')$ be in $u_{ij}$ such that $d < d'$. We keep $(d',f')$ \emph{iff} $f < f'$. In other words, a longer path is only useful to us if it can accommodate a greater flow. If $d = d'$, we keep the pair that provides the bigger flow. Since there can only be $n-1$ different values of $d$, each $u_{ij}$ has at most $n-1$ pairs of $(d, f)$. We assume the pairs are sorted in ascending order of $d$. We make an interesting observation here that once all $(d,f)$ pairs for all $u_{ij}$ are computed (i.e. the APSP-AF problem is solved), the first pairs for all $u_{ij}$ is the solution to the APBSP problem, and the last pairs for all $u_{ij}$ is the solution to the APBP problem.

\begin{example}
If the graph in Figure \ref{fig:graph} had unit edge costs instead of the varying integer edge costs, solving APSD-AF on the graph would result in three $(d,f)$ pairs from vertex 1 to vertex 6, that is, $u_{(1)(6)} = \{(3,5), (4,6), (5,7)\}$.
\end{example}

We now introduce Algorithm \ref{alg:apsdaf} to solve the APSD-AF problem on directed graphs with unit edge costs. Let $P^{f}$ be the approximate distance matrix for shortest paths that can accommodate flows up to $f$. In the acceleration phase, we compute the maximum bottleneck values for all possible path lengths up to $r$ for all pairs, where $r$ is a constant such that $1 < r < n$. Then from the results gathered in the acceleration phase, we prepare a series of distance matrices, $P^{f}$, one for each maximal flow value $f$, and move onto the cruising phase where we compute the shortest distances for all pairs for all flows by repeatedly squaring each $P^{f}$.

\begin{algorithm}
\caption{Solve APSD-AF on graphs with unit edge costs}
\label{alg:apsdaf}
\begin{algorithmic}
\algnotext{EndFor}
\algnotext{EndIf}
\algnotext{EndWhile}
\Statex{/* Initialization for acceleration phase */}
\For{$i \leftarrow 1$ to $n$; $j \leftarrow 1$ to $n$}
	\State{$u_{ij} \leftarrow \phi$} /* $\phi$ is empty */
\EndFor

\mbox{}

\Statex{/* Acceleration phase */}
\For{$\ell \leftarrow 2$ to $r$}
	\State{$C^{\ell} \leftarrow C^{\ell-1} \star C^{1}$ /* $(max,min)$ matrix multiplication */}
	\For{$i \leftarrow 1$ to $n$; $j \leftarrow 1$ to $n$; $i \neq j$}
		\If{$c^{\ell}_{ij} > c^{\ell-1}_{ij}$}
			\State{Append $(\ell,c^{\ell}_{ij})$ to $u_{ij}$}
		\EndIf
	\EndFor
\EndFor

\mbox{}

\Statex{/* Initialization for cruising phase */}
\State{$P^{f} \leftarrow I$ for all maximal flow $f$ /* $I$ has 0 diagonal elements and $\infty$ for others */}
\For{$i \leftarrow 1$ to $n$; $j \leftarrow 1$ to $n$; $i \neq j$}
	\State{Let $u_{ij} = \{(d_{1},f_{1}), (d_{2},f_{2}), ..., (d_{s},f_{s})\}$ for some $s$ /* We skip empty $u_{ij}$ */}
	\State{$k \leftarrow 1$ /* $k$ iterates from $1$ to $s$ */}
	\ForAll{maximal flow $f$ in increasing order}
		\If{$f > f_{k}$}
			\State{$k \leftarrow k + 1$ /* The next $d_{k}$ value is needed */}
		\EndIf
		\If{$k > s$}
			\State{break /* We proceed to the next $u_{ij}$ */}
		\EndIf
		\State{$p^{f}_{ij} \leftarrow d_{k}$}
	\EndFor
\EndFor

\mbox{}

\Statex{/* Cruising phase */}
\ForAll{maximal flow $f$}
	\State{Perform cruising phase of Algorithm \ref{alg:agm} on $P^{f}$}
\EndFor

\mbox{}

\Statex{/* Finalization */}
\For{$i \leftarrow 1$ to $n$; $j \leftarrow 1$ to $n$; $i \neq j$}
	\ForAll{maximal flow $f$ in increasing order}
		\State{$d \leftarrow p^{f}_{ij}$}
		\State{Let the last pair of $u_{ij}$ be $x=(d',f')$ /* If $u_{ij}$ is empty, $x = \phi$ */}
		\If{$x = \phi$ or $(f > f'$ and $d < \infty)$}
			\If{$d = d'$ /* This condition is false if $x = \phi$ */}
				\State{Replace $x$ with $(d,f)$}
			\Else
				\State{Append $(d,f)$ to $u_{ij}$}
			\EndIf
		\EndIf
	\EndFor
\EndFor
\end{algorithmic}
\end{algorithm}

\begin{lemma}
Algorithm \ref{alg:apsdaf} correctly solves APSD-AF on directed graphs with unit edge costs.
\end{lemma}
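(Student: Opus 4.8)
The plan is to reduce the APSD-AF problem, one flow value at a time, to an ordinary unit-cost APSD computation on a subgraph, and then to invoke the correctness of Algorithm~\ref{alg:agm} as a black box. For a maximal flow $f$, I would let $G_f$ denote the subgraph of $G$ consisting of exactly those edges $(i,j)$ with $cap(i,j)\ge f$, and let $\delta_f(i,j)$ be the length of a shortest path from $i$ to $j$ in $G_f$ (with $\delta_f(i,j)=\infty$ if none exists). The first fact to record is that a path can carry flow $f$ iff its bottleneck is at least $f$ iff every edge on it lies in $G_f$; hence the shortest distance that accommodates flow $f$ is precisely $\delta_f(i,j)$. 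The second fact is that the bottleneck of any path equals one of the edge capacities, so $\delta_f(i,j)$, viewed as a function of $f$, is a nondecreasing step function that changes only at the $t$ maximal flows; this justifies discretising over the maximal flows and shows that the target output $u_{ij}$ is exactly the set of nondominated pairs among $\{(\delta_f(i,j),f)\}$.

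Next I would verify the acceleration phase. By induction on $\ell$, using that $C^1$ is the capacity matrix and that the diagonal of each $C^\ell$ is $\infty$, the $(max,min)$-product gives $c^\ell_{ij}=\max$ bottleneck over all paths from $i$ to $j$ of length at most $\ell$; the $\infty$ diagonal is exactly what makes ``at most $\ell$'' (rather than ``exactly $\ell$'') correct. Consequently $c^\ell_{ij}\ge f \iff \delta_f(i,j)\le \ell$, so the breakpoints appended to $u_{ij}$ record precisely the lengths at which the achievable bottleneck strictly increases. I would then show that the cruising-phase initialisation sets $p^f_{ij}=\delta_f(i,j)$ whenever $\delta_f(i,j)\le r$ and leaves $p^f_{ij}=\infty$ otherwise: the increasing-$f$ scan advances $k$ exactly when $f$ exceeds the current breakpoint capacity, so it selects the smallest breakpoint length whose capacity is $\ge f$, which by the equivalence above is $\min\{\ell\le r : c^\ell_{ij}\ge f\}=\delta_f(i,j)$. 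The upshot is that $P^f$ coincides with the matrix that the acceleration phase of Algorithm~\ref{alg:agm} would produce if run on $G_f$ alone.

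With that identification in hand the cruising phase is immediate: Algorithm~\ref{alg:apsdaf} runs the cruising phase of Algorithm~\ref{alg:agm} on each $P^f$, and since $P^f$ is a valid acceleration-phase output for the unit-cost graph $G_f$, the correctness of Algorithm~\ref{alg:agm} reviewed above yields $p^f_{ij}=\delta_f(i,j)$ for all pairs after cruising. Finally, the finalisation step reads off, for increasing $f$, the values $d=p^f_{ij}$; because $\delta_f$ is nondecreasing in $f$ these $d$ are nondecreasing, so appending $(d,f)$ when $d$ strictly exceeds the last recorded distance and overwriting when $d$ equals it reproduces exactly the nondominated set described above, i.e.\ the correct $u_{ij}$.

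The main obstacle, and the step I would spend the most care on, is the identification of $P^f$ with a genuine acceleration-phase output for $G_f$. Two things must be checked. First, the boundary cases: that length-$1$ (direct) edges and the cutoff at $\ell=r$ are handled so that $P^f$ holds the correct shortest flow-$f$ distance for \emph{every} length up to $r$, not merely at the recorded breakpoints. Second, that the bridging-set bound $|S_i|=O(n/r)$ underlying the Algorithm~\ref{alg:agm} analysis still holds when the cruising phase is applied to $P^f$; since $G_f$ is a subgraph of $G$ on the same vertex set and the bound of Alon \emph{et al.} depends only on the path length $r$ and on $n$, this transfers directly, but it is the point that must be stated explicitly to justify reusing the cruising phase unchanged.
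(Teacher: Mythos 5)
Your proof is correct and follows essentially the same route as the paper's: acceleration via iterated $(max,min)$-products to collect all breakpoint pairs with $d\le r$, per-flow initialisation of the $P^f$ matrices, the cruising phase of Algorithm~\ref{alg:agm} applied to each $P^f$, and a finalisation that reads off the nondominated pairs. Your write-up is in fact more explicit than the paper's, which leaves implicit the identification of $P^f$ with the acceleration-phase output on the capacity-$\ge f$ subgraph $G_f$ and the observation that the $O(n/r)$ bridging-set bound is a pigeonhole argument depending only on $n$ and the path length, hence transfers to each $G_f$; making those two points explicit, as you do, is exactly the right place to put the care.
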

\begin{proof}
In the acceleration phase, instead of performing Boolean matrix multiplication as in Algorithm \ref{alg:agm}, we compute the $(max,min)$-product with the capacity matrices $C^{1}$ and $C^{\ell-1}$. After each matrix multiplication, if a path of greater capacity has been found for the vertex pair $(i,j)$, we append the pair $(\ell, c^{\ell}_{ij})$ to $u_{ij}$ since we have found a longer path that can accommodate a greater flow. Thus after the $r^{th}$ iteration of the acceleration phase, all relevant $(d,f)$ pairs for all $u_{ij}$ are found such that $d \leq r$.

After the acceleration phase we initialize the approximate distance matrices $P^{f}$ from $U$, one matrix for each maximal flow $f$, in preparation for the cruising phase. Note that if the $(d,f)$ pair for a given flow value $f$ does not exist in $u_{ij}$, we take the next pair $(d',f')$ in $u_{ij}$ (if one exists) and let $p^{f}_{ij} = d'$.

At this stage, if $p^{f}_{ij} < \infty$, $p^{f}_{ij}$ is already the length of the shortest path from $i$ to $j$ that can push flow $f$. Thus the actual aim of the cruising phase of this algorithm is to compute the shortest distance for all other elements in $P^{f}$ such that $p^{f}_{ij} = \infty$ at the start of the cruising phase. Note that unless $G$ is strongly connected, some elements of $P^{f}$ will remain at $\infty$ until the end of the algorithm. The aim of the cruising phase is achieved by repeatedly squaring each $P^{f}$ with the help of the bridging set, as proven in \cite{AGM}.

Retrieving sets of $(d,f)$ pairs after the cruising phase from each resulting $P^{f}$ is simply a reverse process of the initialization for the cruising phase, and thus our search for all sets of $(d,f)$ pairs for all $(i,j)$ is complete after finalization.\qed
\end{proof}

\begin{lemma}
Algorithm \ref{alg:apsdaf} runs in $O(\sqrt{t}n^{(\omega+9)/4}) = O(\sqrt{t}n^{2.843})$ worst case time.
\end{lemma}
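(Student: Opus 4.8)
The plan is to bound the three substantive stages of Algorithm \ref{alg:apsdaf} — the acceleration phase, the per-flow cruising phase, and the surrounding initialization/finalization — and then pick the threshold $r$ so as to balance the two dominant terms.

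First I would analyze the acceleration phase. It runs $r-1$ iterations, each computing a single $(max,min)$-product $C^{\ell-1} \star C^{1}$ of two $n$-by-$n$ matrices together with $O(n^{2})$ bookkeeping to compare the entries of $C^{\ell}$ and $C^{\ell-1}$ and append pairs to the $u_{ij}$. Since the preceding review records the Duan--Pettie bound \cite{DP}, a single $(max,min)$-product is computable in $O(n^{(\omega+3)/2})$ time, so the whole acceleration phase costs $O\!\left(r\, n^{(\omega+3)/2}\right)$, the $O(rn^{2})$ bookkeeping being absorbed.

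Second I would analyze the cruising phase, which applies the cruising phase of Algorithm \ref{alg:agm} independently to each of the $t$ matrices $P^{f}$. The input I would borrow from \cite{AGM} is that, once a distance matrix holds exact shortest distances for every pair joined by a path of length at most $r$, each bridging set $S_{i}$ has size $O(n/r)$, and $|S_{i}|$ shrinks by a constant factor at every squaring; hence the cost per matrix is a geometric series dominated by the first squaring, namely $O\!\left(n^{2}\cdot n/r\right)=O(n^{3}/r)$. Summing over the $t$ maximal flows gives $O\!\left(t\, n^{3}/r\right)$ for the cruising phase, and the initialization and finalization loops each touch $O(tn^{2})$ entries, which is dominated whenever $r\le n$. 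Adding the two dominant terms yields $O\!\left(r\, n^{(\omega+3)/2}+t\, n^{3}/r\right)$; equalizing them by solving $r\, n^{(\omega+3)/2}=t\, n^{3}/r$ gives $r=\sqrt{t}\,n^{(3-\omega)/4}$, and substituting back produces $O\!\left(\sqrt{t}\,n^{(\omega+9)/4}\right)$, with the logarithmic factors from fast matrix multiplication and from the geometric sums absorbed into the $\tilde{O}$ notation.

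The step I expect to require the most care is justifying that AGM's $O(n/r)$ bridging-set bound legitimately transfers to each $P^{f}$ in isolation. The clean way to see this is to note that, after the initialization for the cruising phase, each $P^{f}$ is exactly the length-$\le r$ shortest-distance matrix of the subgraph $G_{f}$ consisting of the edges of capacity at least $f$ — which is precisely what the correctness lemma above establishes — so the analysis of \cite{AGM} applies verbatim on $G_{f}$ and bounds every $|S_{i}|$ by $O(n/r)$ uniformly across all $t$ flows. The remaining routine obligation is to confirm that the balancing choice $r=\sqrt{t}\,n^{(3-\omega)/4}$ respects $1<r<n$ in the regime of interest (roughly $t<n^{(\omega+1)/2}$), since for larger $t$ the threshold would have to be capped at $r=n$ and the acceleration term would dominate.
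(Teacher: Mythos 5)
Your proposal is correct and follows essentially the same route as the paper: $O(rn^{(\omega+3)/2})$ for the acceleration phase via the Duan--Pettie $(max,min)$-product, $O(tn^{3}/r)$ for the $t$ independent cruising phases, $O(tn^{2})$ absorbed for initialization/finalization, and balancing at $r=\sqrt{t}\,n^{(3-\omega)/4}$. Your added justification that the AGM bridging-set bound applies to each $P^{f}$ separately, and your remark about capping $r$ at $n$ for large $t$, match observations the paper itself makes immediately after the lemma.
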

\begin{proof}
For the acceleration phase we use the the current best known algorithm to compute the $(max,min)$-product in each iteration, which gives us the time bound of $O(rn^{(3+\omega{})/2})$ \cite{DP}. The time complexity for the cruising phase is $O(tn^3/r)$ since there are a total of $t$ maximal flows, each taking $O(n^3/r)$ time to finish the computation of APSD. The time bound for the initialization for the cruising phase and the finalization is $O(tn^2)$, which is absorbed by $O(tn^3/r)$ since $n/r > 1$. We balance the time complexities of the acceleration phase and the cruising phase by setting $r = \sqrt{t}n^{(3-\omega)/4}$, and this gives us the total worst case time complexity of $O(\sqrt{t}n^{(\omega+9)/4})$.\qed
\end{proof}

If $t = O(n^{2})$, the value we choose for $r$ may exceed $n$. In such a case, we simply stay in the acceleration phase until $r = n - 1$. Thus a more accurate worst case time complexity of Algorithm \ref{alg:apsdaf} is actually $O(\min{\{n^{(5+\omega)/2}, \sqrt{t}n^{(\omega+9)/4}\}})$.

A straightforward method of solving the APSD-AF problem is to repeatedly compute APSD for each maximal flow value $f$ using only edges that have capacities greater than or equal to $f$. This method is equivalent to starting the cruising phase at $r=1$, giving us the time complexity of $O(tn^{2.530})$ if we use Zwick's algorithm to solve APSD for each maximal flow value \cite{Zwick02}. For $t > n^{0.626}$, Algorithm \ref{alg:apsdaf} is faster. Note that a simple decremental algorithm where edges are removed in the reverse order of capacities while repeatedly solving APSD cannot be used to solve the APSD-AF problem because edges with larger capacities may later be required to provide shorter paths for a smaller maximal flow values.

\begin{theorem}
\label{the:apsdaf}
There exists an algorithm that can solve APSP-AF on directed graphs with uni edge costs in $\tilde{O}(\sqrt{t}n^{{(\omega+9)/4}})$ worst case time.
\end{theorem}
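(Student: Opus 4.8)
The plan is to observe that the theorem lifts the All Pairs Shortest \emph{Distances} for All Flows (APSD-AF) result established in the two preceding lemmas to the full All Pairs Shortest \emph{Paths} for All Flows (APSP-AF) problem, in which an actual shortest path (not merely its distance) must be recoverable for every vertex pair and every maximal flow. First I would invoke the correctness lemma and the running-time lemma for Algorithm~\ref{alg:apsdaf}: together they guarantee that, for each maximal flow $f$, the matrix $P^{f}$ holds the correct shortest distances, and that the whole computation finishes in $O(\sqrt{t}n^{(\omega+9)/4})$ time. The remaining task is therefore purely one of path reconstruction, which I would handle by the standard device of maintaining witness (successor) information alongside the distance computations, so that a path realizing each distance can be read off afterwards.

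Concretely, I would augment each of the two phases with a witness matrix. In the acceleration phase, whenever the $(max,min)$-product improves $c^{\ell}_{ij}$, I would record the intermediate vertex $k$ that realizes the new bottleneck $\min\{c^{\ell-1}_{ik},c^{1}_{kj}\}$; in the cruising phase, whenever the bridging-set squaring of some $P^{f}$ updates an entry to $m_{ij}=\min_{k\in S_{i}}\{p^{f}_{ik}+p^{f}_{kj}\}$, I would record the witness $k\in S_{i}$ attaining this minimum. Given these witness matrices, a shortest path for any triple (source, target, flow) is reconstructed by repeatedly following witnesses; since each step fixes one intermediate vertex and every path has at most $n-1$ edges, reconstruction is output-sensitive and adds nothing to the precomputation budget.

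The key step, and the place where the soft-$O$ in the statement does its work, is arguing that recording witnesses does not inflate the time bound beyond polylogarithmic factors. For the Boolean and distance products this is exactly the witness-finding machinery underlying the path-recovery version of the algorithm of Alon \emph{et al.}~\cite{AGM}, which produces a full witness matrix for a single matrix product in $\tilde{O}(n^{\omega})$ time; applied once per acceleration step and once per squaring per flow, the overhead is absorbed into $\tilde{O}(\sqrt{t}n^{(\omega+9)/4})$, explaining the passage from the $O(\cdot)$ of the running-time lemma to the $\tilde{O}(\cdot)$ of the theorem. The one point I expect to require genuine care is witness recovery for the $(max,min)$-product, which is less routine than Boolean witness finding: I would either extract witnesses directly from the Duan--Pettie computation~\cite{DP} or fall back on a generic witness-extraction argument costing only a polylogarithmic factor over the product time. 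Once that obstacle is cleared, combining correctness, running time, and path recoverability yields the claimed $\tilde{O}(\sqrt{t}n^{(\omega+9)/4})$ bound and completes the proof.
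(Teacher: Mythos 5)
Your proposal matches the paper's proof in essence: both reduce the theorem to the two preceding lemmas plus path recovery, both store witness/successor information in each phase (citing Duan--Pettie for polylog-overhead witness extraction in the $(max,min)$-product and noting that witnesses are free in the ordinary squaring of the cruising phase), and both make reconstruction output-sensitive so that explicit paths are produced in time linear in their length. The only cosmetic difference is that the paper converts witnesses into successor nodes stored as $(d,f,s)$ triplets and walks successors one hop at a time, whereas you follow midpoint witnesses recursively; both are standard and equivalent here.
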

\begin{proof}
As noted earlier there can be $O(n)$ $(d,f)$ pairs for each vertex pair $(i,j)$. Since the lengths of each path can be $O(n)$, explicitly listing all paths takes $O(n^4)$ time. We get around this by modifying Algorithm \ref{alg:apsdaf} to extend the $(d,f)$ pair to the $(d,f,s)$ triplet, where $s$ is the \emph{successor} node, such that retrieving the actual path from $(d,f,s)$ can be performed by simply following the successor nodes. In the acceleration phase witnesses for the $(max,min)$-product can be retrieved with an extra polylog factor \cite{DP}, and the successor nodes can be computed from the witnesses in each iteration in $O(n^2)$ time \cite{Zwick02}. In the cruising phase retrieving the witnesses, and hence the successor nodes, is a simple exercise since ordinary matrix multiplication is performed. Therefore extending $(d,f)$ to $(d,f,s)$ only takes an additional polylog factor.

The explicit path for a given flow demand from $i$ to $j$ can be generated in time linear to the path length as follows. Firstly we perform binary search for the triplet $(d,f,s)$ in $u_{ij}$ with $f$ as the key to find the minimum distance $d$ such that $f$ is greater than or equal to the given flow requirement. We then traverse the successor nodes $s$ one by one, using $d$ to look up each subsequent successor node in $O(1)$ time.\qed
\end{proof}

\section{APSP-AF on graphs with integer edge costs}
\label{sec:apspaf2}

We now consider solving the APSD-AF problem on directed graphs with integer edge costs and real edge capacities, where the edge cost is bounded by $c$. Note that with integer edge costs we need to make a clear distinction between path lengths and distances. One approach for solving this problem is to use the method described in \cite{AGM} to replace $G$ with an expanded graph $G'$ such that all edges in $G'$ have unit edge costs, then applying the algorithm on $G'$ to solve the problem on $G$. $G'$ is created by attaching a chain of $c-1$ artificial vertices to each real vertex such that the artificial edges linking the artificial vertices in each chain have unit edge costs and capacities of $\infty$. We then replace each real edge $(i,j)$ with an artificial edge with unit edge cost and capacity of $cap(i,j)$ by choosing one of the artificial vertices of $i$ (or $i$ itself) as the source vertex and the real vertex $j$ as the destination, such that there exists a path from $i$ to $j$ with length equal to $cost(i,j)$. See Figure \ref{fig:g_prime} for an illustration of how a graph is expanded. The expanded graph $G'$ has $O(cn)$ vertices, and we can clearly solve APSD-AF on $G$ by solving APSD-AF on $G'$ in $O(\sqrt{t}(cn)^{(9+\omega)/4})$ time.


\begin{example}
Solving APSD-AF on the graph in Figure \ref{fig:graph} results in a total of five $(d,f)$ pairs from vertex 1 to 6, that is, $u_{(1)(6)} = \{(4,2), (6,3), (7,5), (8,6), (9,7)\}$.
\end{example}

We can do better, however, with the key observation that only the acceleration phase of Algorithm \ref{alg:apsdaf} is restricted to graphs with unit edge costs. In other words, we can complete the acceleration phase on the expanded graph $G'$, gather the intermediate results, and then finish off the remaining computation after contracting the graph back to $G$. We need care here, as the path lengths in $G'$ are actually equivalent to the path costs in $G$, and the bridging sets in the cruising phase are determined from the path lengths rather than the path costs. Therefore we need to make substantial changes to Algorithm \ref{alg:apsdaf} to keep track of both the path lengths and the path costs of $G$ in the acceleration phase, as well as modifying the cruising phase to correctly use the path lengths of $G$ in determining the bridging sets.

\begin{figure}
\def\svgwidth{250pt}
\begin{center}
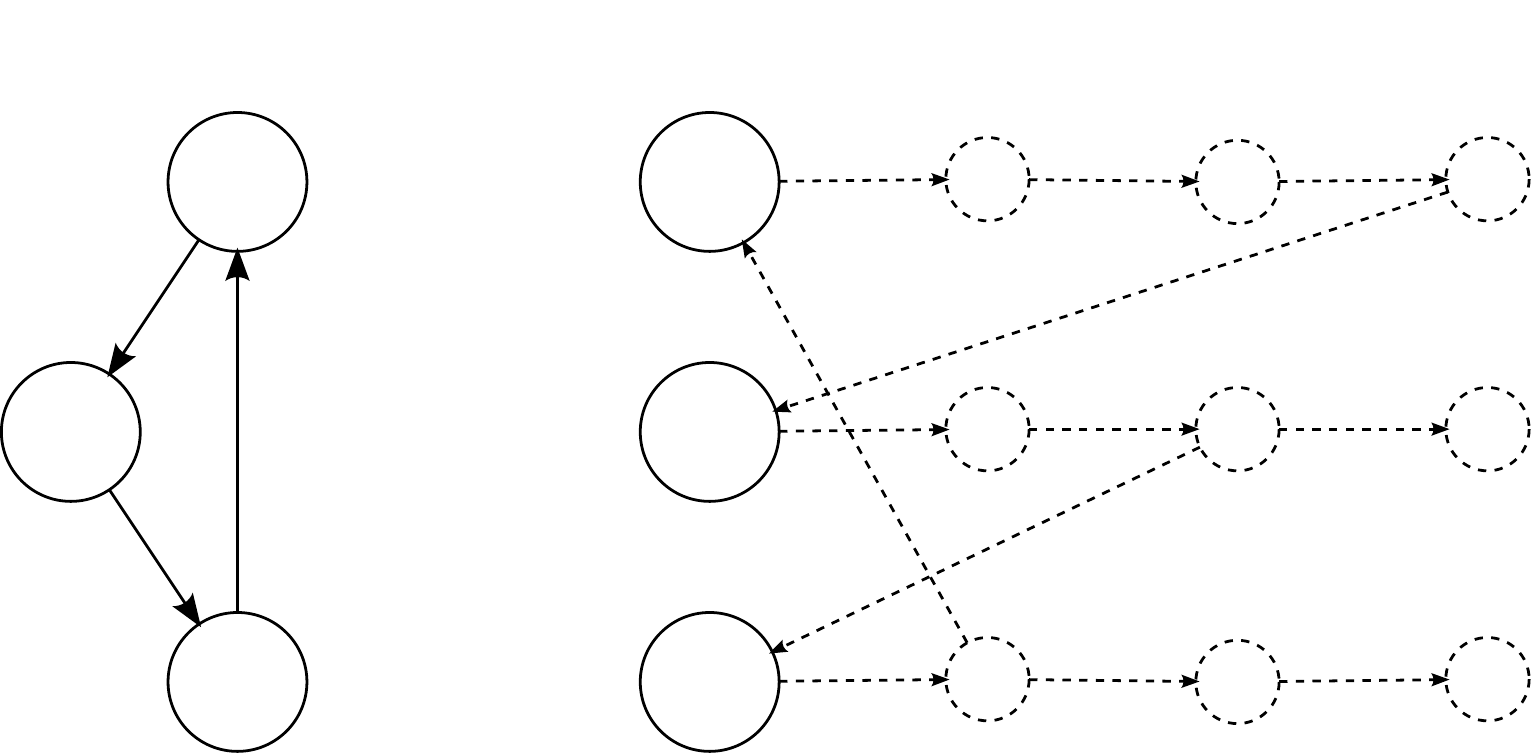
\end{center}
\caption{Expanding $G$ to $G'$ with $c = 4$.}
\label{fig:g_prime}
\end{figure}

Firstly we extend the pair $(d,f)$ to the triplet $(h,d,f)$, where $h$ is the path length in $G$, $d$ is the path cost in $G$ (i.e. the path length in $G'$) and $f$ is the maximal flow. We introduce $U' = \{u'_{ij}\}$ where $u'_{ij}$ is a set of triplets $(h,d,f)$ for all pairs of vertices in $G'$. We omit the superscript $\ell$ that denotes the path length in the following matrix definitions. Let $C' = \{c'_{ij}\}$ be the capacity matrix of $G'$ and let $W = \{w_{ij}\}$ be the witness matrix for the $(max,min)$-product. Let $Q^{f} = \{q^{f}_{ij}\}$ such that $q^{f}_{ij}$ is the length of the path that gives the path cost (distance) of $p^{f}_{ij}$, where $P^{f} = \{p^{f}_{ij}\}$ is the distance matrix as defined in Section \ref{sec:apspaf}. That is, $p^{f}_{ij}$ is the minimum path cost (distance) of all paths from $i$ to $j$ that can push flow of amount $f$. Note that $h$ in the triplet $(h,d,f)$ is no longer required once all $Q^{f}$ are initialized before the start of the cruising phase.

\begin{algorithm}
\caption{Solve APSD-AF on directed graphs with non-negative integer edge costs}
\label{alg:apsdaf2}
\begin{algorithmic}
\algnotext{EndFor}
\algnotext{EndIf}
\algnotext{EndWhile}
\newcommand{\IfLine}[2]{%
    \State\algorithmicif\ {#1}\ \algorithmicthen\ {#2}%
}
\Statex{/* Initialization for acceleration phase */}
\State{Create $G'$ from $G$ /* $G$ is expanded to $G'$ */}
\For{$i \leftarrow 1$ to $cn$; $j \leftarrow 1$ to $cn$}
	\State{$u'_{ij} \leftarrow \phi$}
\EndFor

\mbox{}

\Statex{/* Acceleration phase */}
\For{$\ell \leftarrow 2$ to $r$}
	\State{$C'^{(\ell)} \leftarrow C'^{(\ell-1)} \star C'^{1}$ /* Witnesses given as $W = \{w_{ij}\}$ */}
	\For{$i \leftarrow 1$ to $cn$; $j \leftarrow 1$ to $cn$; $i \neq j$}
		\If{$c'^{(\ell)}_{ij} > c'^{(\ell-1)}_{ij}$}
			\State{Let $k = w_{ij}$, and $(h,d,f) \leftarrow$ last triplet in $u'_{ik}$ /* If empty, h = 0 */}
			\If{$j \in G$ /* If $j$ is a real vertex */}
				\State{Append $(h+1,\ell,c'^{(\ell)}_{ij})$ to $u'_{ij}$}
			\Else
				\State{Append $(h,\ell,c'^{(\ell)}_{ij})$ to $u'_{ij}$}
			\EndIf
		\EndIf
	\EndFor
\EndFor

\mbox{}

\Statex{/* Initialization for cruising phase */}
\State{$U \leftarrow $rows and columns in $U'$ for real vertices /* $G'$ is contracted back to $G$ */}
\State{$P^{f,\ell}, Q^{f,\ell} \leftarrow I$ for all maximal flow $f$}
\For{$i \leftarrow 1$ to $n$; $j \leftarrow 1$ to $n$; $i \neq j$}\label{line:ic_start}
	\State{Let $u_{ij} = \{(h_{1},d_{1},f_{1}), ..., (h_{s},d_{s},f_{s})\}$ for some $s$ /* Skip empty $u_{ij}$ */}
	\State{$k \leftarrow 1$ /* $k$ iterates from $1$ to $s$ */}
	\ForAll{maximal flow $f$ in increasing order}
		\IfLine{$f > f_{k}$}{$k \leftarrow k + 1$ /* The next $d_{k}$ value is needed */}
		\IfLine{$k > s$}{break /* We proceed to the next $u_{ij}$ */}
		\State{$p^{f,\ell}_{ij} \leftarrow d_{k}$; $q^{f,\ell}_{ij} \leftarrow h_{k}$}
	\EndFor
\EndFor\label{line:ic_end}

\mbox{}

\Statex{/* Cruising phase */}
\ForAll{maximal flow $f$}
	\While{$\ell < n$}
		\State{$\ell' \leftarrow \lceil \frac{3\ell}{2} \rceil$}
		\For{$i \leftarrow 1$ to $n$}
			\State{Scan $i^{th}$ row of $Q^{f,\ell}$ with $j$ to find the smallest set of equal $q^{f,\ell}_{ij}$ such that}
			\State{\mbox{ } \mbox{ } $\lceil \ell/2 \rceil \leq q^{f,\ell}_{ij} \leq \ell$ and let the set of corresponding $j$ be $S_{i}$}
		\EndFor
		\For{$i \leftarrow 1$ to $n$; $j \leftarrow 1$ to $n$}
			\State{$m_{ij} \leftarrow \min_{k \in S_{i}}\{p^{f,\ell}_{ik} + p^{f,\ell}_{kj}\}$}
			\State{$k \leftarrow$ the vertex that gives above $m_{ij}$ such that $q^{f,\ell}_{ik} + q^{f,\ell}_{kj}$ is minimum}
			\If{$m_{ij} < p^{f,\ell}_{ij}$}
				\State{$p^{f,\ell'}_{ij} \leftarrow m_{ij}; q^{f,\ell'}_{ij} \leftarrow q^{f,\ell}_{ik} + q^{f,\ell}_{kj}$}
			\Else
				\State{$p^{f,\ell'}_{ij} \leftarrow p^{f,\ell}_{ij}; q^{f,\ell'}_{ij} \leftarrow q^{f,\ell}_{ij}$}
			\EndIf
		\EndFor
		\State{$\ell \leftarrow \ell'$}
	\EndWhile
\EndFor
	
\mbox{}

\Statex{/* Finalization - same as Algorithm \ref{alg:apsdaf} */}
\end{algorithmic}
\end{algorithm}

\begin{lemma}
Algorithm \ref{alg:apsdaf2} correctly solves APSD-AF on directed graphs with non-negative integer edge costs.
\end{lemma}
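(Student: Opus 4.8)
The plan is to follow the two–phase correctness argument already used for Algorithm~\ref{alg:apsdaf}, but to carry the extra hop coordinate $h$ (ultimately stored as the length matrix $Q^{f,\ell}$) through every step and to justify each new piece of bookkeeping forced by the expansion to $G'$. I would state and prove four claims in the order the algorithm executes them: correctness of the acceleration phase run on $G'$, correctness of the contraction $U\leftarrow U'$, correctness of the cruising--phase initialisation (lines~\ref{line:ic_start}--\ref{line:ic_end}), and correctness of the modified cruising phase. Throughout I would use the two structural facts established in Section~\ref{sec:apspaf2}: a $G'$-length equals a $G$-cost, and an edge of $G$ of cost $a$ becomes a chain of $a$ unit-cost edges of $G'$ whose only real vertex is its head.

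First, for the acceleration phase I would argue by induction on $\ell$ that whenever $c'^{(\ell)}_{ij}>c'^{(\ell-1)}_{ij}$ the triplet appended to $u'_{ij}$ records the true $G$-hop count of a path realising the new bottleneck $c'^{(\ell)}_{ij}$. The inductive step uses the witness $k=w_{ij}$: the improving length-$\ell$ path of $G'$ is an optimal length-$(\ell-1)$ path from $i$ to $k$ followed by the unit edge $(k,j)$, so its $G$-hop count is the $h$ stored in the last triplet of $u'_{ik}$, incremented by one exactly when $j$ is real (the unit edge then closes a real edge of $G$) and left unchanged otherwise. The point to verify carefully is that the \emph{last} triplet of $u'_{ik}$ indeed carries the hop count of a path attaining $c'^{(\ell-1)}_{ik}$; this holds because a triplet is appended precisely when the bottleneck strictly increases, so the most recent triplet always reflects the current maximal bottleneck from $i$ to $k$. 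Restricting $U'$ to real rows and columns is then harmless, since artificial vertices are internal to single edges and real-to-real subpaths are untouched; so for each real pair $(i,j)$ the set $u_{ij}$ contains exactly the triplets $(h,d,f)$ with $d\le r$ in which $d$ is the minimum $G$-cost of a path of bottleneck at least $f$ and $h$ is its $G$-length. The initialisation then sets, for each maximal flow $f$, $p^{f}_{ij}$ to the shortest $G$-cost in the subgraph $G_f$ of edges of capacity at least $f$ whenever that cost is at most $r$, and $q^{f}_{ij}$ to the matching hop count, exactly as the reverse scan of Algorithm~\ref{alg:apsdaf}.

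Next I would fix a flow $f$ and prove the cruising phase by induction on the hop parameter $\ell$, using the invariant $J(\ell)$: for every pair whose \emph{minimum-hop} shortest-cost path in $G_f$ uses at most $\ell$ hops, $p^{f,\ell}_{ij}$ equals that shortest cost and $q^{f,\ell}_{ij}$ equals the minimum hop count. For the base case, a path of at most $\lfloor r/c\rfloor$ hops has cost at most $r$ and is therefore found by the acceleration phase, so $J(\lfloor r/c\rfloor)$ holds and the cruising parameter must be started at $\lfloor r/c\rfloor$. For the inductive step I would take a minimum-hop shortest-cost path $\pi$ with $\ell<h^\ast\le\ell'=\lceil 3\ell/2\rceil$ hops and split it at its prefix vertex $k$ sitting at hop position equal to the single level value $v_0$ of $S_i$. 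The key lemma is that every prefix and every suffix of a minimum-hop shortest-cost path is itself minimum-hop, so $q^{f,\ell}_{ik}=v_0$ (hence $k\in S_i$) and $q^{f,\ell}_{kj}=h^\ast-v_0\le\ell$; by $J(\ell)$ both halves are correct, $p^{f,\ell}_{ik}+p^{f,\ell}_{kj}$ equals the cost of $\pi$, and since $m_{ij}$ can never undercut a genuine path cost we get $m_{ij}=p_{ij}$. The tie-break that chooses $k$ minimising $q^{f,\ell}_{ik}+q^{f,\ell}_{kj}$ then sets $q^{f,\ell'}_{ij}=h^\ast$, restoring the minimum-hop bookkeeping for $J(\ell')$. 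The finalisation is identical to Algorithm~\ref{alg:apsdaf} and needs no separate argument.

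I expect the main obstacle to be this last induction, specifically the reconciliation of the \emph{cost}-bounded guarantee produced by the acceleration phase with the \emph{hop}-bounded induction driving the cruising phase. Three ingredients must be pinned down: (i) the bridging set has to be read from the hop matrix $Q^{f,\ell}$ rather than the cost matrix $P^{f,\ell}$, so that the pigeonhole level-set argument of~\cite{AGM} applies to a quantity that does increase by one along a path; (ii) the minimum-hop prefix/suffix property, which is automatic in the unit-cost setting of~\cite{AGM} but here must be secured by the minimum-hop tie-break and by checking that the hop counts written during the acceleration phase are themselves minimal, so that a single bridging level $v_0$ works simultaneously for all targets $j$; and (iii) the correct offset $\lfloor r/c\rfloor$ for the starting hop parameter, without which the base case of $J$ fails for pairs having few but expensive edges. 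Once these three points are settled the remaining steps are routine.
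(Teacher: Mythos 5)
Your proposal follows the same route as the paper's own proof --- correctness of the hop count $h$ via the witness argument in the acceleration phase, followed by the observation that the cruising phase still works once the bridging sets are read from the hop matrix $Q^{f,\ell}$ --- but it is considerably more rigorous than what the paper actually writes. The paper argues carefully only that $h$ is incremented at the right moments (because multiplying by $C'^{(1)}$ forces the witness $w_{ij}$ to be the immediate predecessor of $j$), and then dismisses the cruising phase with ``clearly the correctness \ldots{} remains intact.'' Your invariant $J(\ell)$ on minimum-hop shortest-cost paths, the prefix/suffix lemma, and the starting hop parameter $\lfloor r/c\rfloor$ are exactly the ingredients needed to turn that sentence into a proof, and the paper's complexity analysis implicitly confirms the $r/c$ starting point.

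That said, your ingredient (ii) is flagged but not discharged, and it is a genuine issue rather than a routine verification. The acceleration phase records $h$ by following an \emph{arbitrary} witness of the $(\max,\min)$-product: the triplet appended to $u'_{ij}$ inherits the hop count of whichever bottleneck-maximal, cost-minimal path the witness chain happens to trace, and nothing in Algorithm~\ref{alg:apsdaf2} minimises over witnesses. So the base case of $J(\lfloor r/c\rfloor)$ --- that $q^{f}_{ij}$ equals the \emph{minimum} hop count over shortest-cost paths of bottleneck at least $f$ --- does not follow from the algorithm as written. Without it, the vertex of $\pi$ at hop position $v_{0}$ need not lie in $S_{i}$, and the pigeonhole argument for the bridging set can miss the splitting vertex. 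To close this you would need either an additional argument (e.g., that some shortest-cost path consistent with the recorded hop counts always crosses the chosen level set, which is not obvious) or a modification of the acceleration phase that breaks ties among witnesses by hop count. This gap is inherited from the paper, which does not address it at all, but since your strategy explicitly rests on the min-hop invariant you cannot leave (ii) as an unverified assumption.
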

\begin{proof}
We start by creating $G'$ from $G$ then proceed to the acceleration phase. We only need to show that the actual path length $h$ in the triplet $(h,d,f)$ is correctly determined, since we have already discussed the $(d,f)$ pairs in Section \ref{sec:apspaf}. What is effectively happening in the acceleration phase of Algorithm \ref{alg:apsdaf2} is that the path length information is carried from one real vertex to the next real vertex by the artificial vertices in between. Since we are multiplying by $C'^{(1)}$ in each iteration, the witness $w_{ij}$ will always be the vertex that comes straight before the destination vertex $j$ on the path from $i$ to $j$. That is, it is not possible for any vertices (real or artificial) to exist between $k = w_{ij}$ and $j$. Therefore we retrieve the last $(h,d,f)$ triplet from $u'_{ik}$ and increment the given $h$ \emph{iff} $j$ is a real vertex. Thus the correct path length in $G$ is given by $h$ at the end of the acceleration phase since we are not counting the artificial vertices in the path.

The changes made to the cruising phase is to ensure that the bridging sets $S_{i}$ is correctly determined from the path lengths rather than the path costs. Note that in Algorithm \ref{alg:apsdaf} this distinction was unnecessary because we were only considering graphs with unit edge costs. Clearly the correctness of the crusing phase remains intact by keeping $q^{f,\ell}_{ij}$ updated alongside $p^{f,\ell}_{ij}$.\qed
\end{proof}

\begin{lemma}
Algorithm \ref{alg:apsdaf2} runs in $\tilde{O}(\sqrt{t}c^{(\omega+5)/4}n^{(\omega+9)/4}) = \tilde{O}(\sqrt{t}c^{1.843}n^{2.843})$ worst case time.
\end{lemma}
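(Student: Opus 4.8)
The plan is to bound the running time phase by phase, exactly as in the proof of the unit-cost Lemma, and then re-balance the two dominant phases against the new parameters $c$ and $cn$. First I would account for the acceleration phase. Since it is run on the expanded graph $G'$, which has $O(cn)$ vertices, each of its $r$ iterations performs one $(max,min)$-product over $cn$-by-$cn$ matrices, costing $\tilde{O}((cn)^{(3+\omega)/2})$ via the Duan--Pettie algorithm; retrieving the witness matrix $W$ (needed to propagate the length field $h$) adds only a polylogarithmic factor and is therefore absorbed into the $\tilde{O}$. The per-iteration bookkeeping that appends triplets to $u'_{ij}$ is $O((cn)^2)$ and is dominated. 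Hence the acceleration phase costs $\tilde{O}(r(cn)^{(3+\omega)/2}) = \tilde{O}(r\,c^{(3+\omega)/2}n^{(3+\omega)/2})$.

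Next I would bound the cruising phase, which (after contraction) operates on $n$-by-$n$ matrices $P^{f,\ell}$ and $Q^{f,\ell}$, one pair per maximal flow $f$. Per flow, the cost of an iteration is $O(n^2 \cdot |S_i|)$, and since $\ell$ grows geometrically by the factor $3/2$ the bridging-set sizes shrink by a constant factor per iteration, so the first squaring dominates and the total telescopes into a geometric series. The crucial quantity is therefore the size of the initial bridging set. Here I would argue that although the acceleration phase has covered all paths of \emph{cost} up to $r$, the bridging set is formed from path \emph{lengths} $q^{f,\ell}_{ij}$, and the two notions no longer coincide once edge costs may be as large as $c$. A path consisting of $L$ edges can cost as much as $cL$, so correctness of the shortest-cost information for \emph{all} paths of length at most $L$ is only guaranteed when $cL \le r$, i.e. for $L \le r/c$. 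Applying the Alon--Galil--Margalit pigeonhole over the length window $[\lceil L/2\rceil, L]$ with $L = r/c$ yields $|S_i| = O(n/L) = O(cn/r)$, so each flow costs $O(n^2 \cdot cn/r) = O(cn^3/r)$ and the whole cruising phase costs $O(tcn^3/r)$.

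It remains to collect the lower-order terms and balance. The contraction, the initialization of the $t$ matrix pairs, and the finalization each cost $O(tn^2)$, which is dominated by $O(tcn^3/r)$ since $cn/r > 1$. Balancing the two dominant phases by setting $r\,c^{(3+\omega)/2}n^{(3+\omega)/2} = tcn^3/r$ and solving for $r$ gives $r = \sqrt{t}\,c^{-(1+\omega)/4}n^{(3-\omega)/4}$; substituting back collapses the exponents to $c^{(\omega+5)/4}$ and $n^{(\omega+9)/4}$, which is the claimed $\tilde{O}(\sqrt{t}\,c^{(\omega+5)/4}n^{(\omega+9)/4})$ bound (as before, one also caps $r$ at $n-1$ should the balanced value exceed $n$). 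The hard part, and the only place the analysis genuinely departs from the unit-cost Lemma, is isolating the factor $c$ in the bridging-set bound: one must justify that the usable path-length horizon at the phase boundary is $r/c$ rather than $r$, and confirm that the $\frac{3}{2}$-geometric growth of $\ell$ still drives the per-flow series to converge once $\ell$ is measured in lengths while the matrices are combined by cost.
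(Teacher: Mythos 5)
Your proposal is correct and follows essentially the same route as the paper's own proof: acceleration on $G'$ in $\tilde{O}(r(cn)^{(3+\omega)/2})$, cruising in $O(tcn^3/r)$ because the usable path-length horizon in $G$ after acceleration is only $r/c$, and balancing to get $r=\sqrt{t}\,c^{-(1+\omega)/4}n^{(3-\omega)/4}$. Your explicit justification of the $r/c$ horizon (a length-$L$ path may cost up to $cL$) is a slightly more detailed version of the paper's one-line remark to the same effect.
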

\begin{proof}
The time complexity of the acceleration phase is $\tilde{O}(r(cn)^{(3+\omega)/2})$ since there are $O(cn)$ vertices in $G'$. After the $r^{th}$ iteration in the acceleration phase, we have computed the bottleneck for all paths of lengths up to $r$, but this is path lengths in the expanded graph $G'$, and not $G$. We divide $r$ by $c$ to retrieve the lower bound on the path lengths in the original graph $G$ after the acceleration phase. Therefore the time complexity of the cruising phase is $O(tcn^{3}/r)$. Both the time complexities for initialization for cruising phase and finalization are again absorbed by the time complexity of the cruising phase. We balance the time complexities of the acceleration phase and the cruising phase by setting $r = \sqrt{t}c^{(-1-\omega)/4}n^{(3-\omega)/4}$, which gives us the total worst case time complexity of $\tilde{O}(\sqrt{t}c^{(\omega+5)/4}n^{(\omega+9)/4})$.\qed
\end{proof}

\begin{theorem}
There exists an algorithm to solve the APSP-AF problem on directed graphs with positive integer edge costs in $\tilde{O}(\sqrt{t}c^{(\omega+5)/4}n^{(\omega+9)/4})$ worst case time complexity.
\end{theorem}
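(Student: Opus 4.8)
The plan is to obtain the full APSP-AF result by augmenting the APSD-AF algorithm of Algorithm~\ref{alg:apsdaf2} with successor information, exactly as was done for the unit-cost case in Theorem~\ref{the:apsdaf}. Since Algorithm~\ref{alg:apsdaf2} already computes, within the stated time bound, the complete collection of $(h,d,f)$ triplets and hence every shortest distance $d$ for every flow $f$, the only thing left to show is that a \emph{successor} node can be attached to each triplet so that an explicit path is recoverable, and that doing so inflates the running time by at most a polylog factor. I would therefore extend each triplet $(h,d,f)$ to $(h,d,f,s)$, where $s$ records the next real vertex of $G$ along the corresponding shortest path leaving $i$, and argue correctness phase by phase.

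For the acceleration phase, recall that the $(max,min)$-products already come with a witness matrix $W=\{w_{ij}\}$ obtainable with only an extra polylog factor~\cite{DP}; because each product multiplies by $C'^{(1)}$, the witness $w_{ij}$ is the immediate predecessor of $j$ on the bottleneck-realising path. The successor datum is then propagated in the same spirit as the path length $h$ is propagated in Algorithm~\ref{alg:apsdaf2}: the entry stored for $(i,j)$ is inherited from the entry stored for $(i,w_{ij})$, and is resolved to a concrete real vertex precisely when a real vertex is first encountered. Since the artificial vertices are internal to a single edge of $G$, the real vertices visited along a path of $G'$ are exactly the vertices of the corresponding path of $G$, so this yields a genuine successor in $G$; converting witnesses into successors costs $O((cn)^2)$ per iteration~\cite{Zwick02}, which is absorbed into the acceleration-phase bound.

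In the cruising phase, which after contraction operates on the real vertices of $G$ with ordinary $(min,+)$-style products, the witness is simply the vertex $k\in S_i$ attaining $m_{ij}$; I would record it alongside the updates to $p^{f,\ell}_{ij}$ and $q^{f,\ell}_{ij}$, giving the successor of the combined path. The finalization step carries the successor verbatim into each appended or replaced pair. Reconstruction for a given flow demand from $i$ to $j$ then follows Theorem~\ref{the:apsdaf}: binary-search $u_{ij}$ on $f$ to locate the minimum-cost entry meeting the demand, read off $s$, and chase successors using the residual distance as the $O(1)$ lookup key, so the whole $G$-path is produced in time linear in its length.

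The part needing the most care is the seam between the two phases. The successors produced during the acceleration phase are derived from witnesses of the \emph{expanded} graph $G'$, whereas those produced during the cruising phase live in the contracted graph $G$, and I must verify that the distance-indexed lookup used for reconstruction stays consistent across this boundary, in particular that the residual-distance key agrees both with the $d$-component recorded by the acceleration phase and with the $p^{f,\ell}$ values updated in the cruising phase. Once this bookkeeping is checked, every modification contributes only a polylog factor, so the overall running time remains $\tilde{O}(\sqrt{t}\,c^{(\omega+5)/4}n^{(\omega+9)/4})$, which establishes the theorem.
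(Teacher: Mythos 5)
Your proposal matches the paper's proof in essence: both augment the $(h,d,f)$ triplets of Algorithm~\ref{alg:apsdaf2} with successor nodes exactly as in Theorem~\ref{the:apsdaf}, obtain them from the $(max,min)$-product witnesses at a polylog cost, map artificial witnesses of $G'$ back to real vertices of $G$ via information stored at expansion time, and reconstruct paths by binary search on $f$ followed by distance-keyed successor chasing. Your version is simply a more detailed elaboration (including the phase-boundary bookkeeping the paper leaves implicit) of the same argument.
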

\begin{proof}
Clearly we can take a similar approach to the method described in the proof of Theorem \ref{the:apsdaf}. We can still use the path cost (distance) to look up each successor node in $O(1)$ time. We note that the witnesses in the acceleration phase can be artificial vertices, but the corresponding real vertices can be retrieved in $O(1)$ time simply by storing this information when $G$ is expanded to $G'$.\qed
\end{proof}

If $c=1$, $\tilde{O}(\sqrt{t}c^{(\omega+5)/4}n^{(\omega+9)/4})$ becomes $\tilde{O}(\sqrt{t}n^{(\omega+9)/4})$, hence we have successfully generalized the APSP-AF problem from graphs with unit edge costs to graphs with integer edge costs. To compare with the straightforward method of repeatedly solving the APSP problem for each maximal flow value using Zwick's algorithm, we use the formula $\omega(1,r,1) = 2 + (\omega - 2)(r - \alpha)/(1-\alpha)$ where $O(n^{\omega(1,r,1)})$ is the time taken to multiply an $n$-by-$n^{r}$ matrix with an $n^{r}$-by-$n$ matrix, and $\alpha$ is a constant such that multiplying an $n$-by-$n^{\alpha}$ matrix with an $n^{\alpha}$-by-$n$ remains within the $O(n^{2})$ time bound. We let $\omega = 2.376$ and $\alpha = 0.294$ in this comparison \cite{Zwick02}. The time complexity of the straightforward method becomes $\tilde{O}(tn^{2+\mu})$, where $c = n^{x}$ such that the equation $\omega(1,\mu,1) = 1 + 2\mu - x$ is satisfied. With $t = O(n^{2})$, Algorithm \ref{alg:apsdaf2} is faster for $c < n^{0.629}$.

Finally we make a note that the idea of expanding the graph to $G'$ for the acceleration phase then contracting it back to $G$ for the cruising phase can retrospectively be applied to Algorithm \ref{alg:agm} to give a sharper time bound than $O((cn)^{(3+\omega)/2})$, which is the time bound given by Alon \emph{et al.} in their original paper \cite{AGM}. The time bound of $O((cn)^{(3+\omega)/2})$ is sub-cubic for $c < n^{0.117}$. Using our new approach of contracting the graph after the acceleration phase, the time bound can be improved to $O(c^{(1+\omega)/2}n^{(3+\omega)/2})$, which is sub-cubic for $c < n^{0.186}$. For solving the same problem as Algorithm \ref{alg:agm}, however, other algorithms are already known that remain sub-cubic for larger values of $c$ \cite{Tad95,Zwick02}.

\subsection{Concluding remarks}
The key achievements of this paper are: 1) the introduction of a new problem that clearly has numerous practical applications in network analysis involving both path costs and capacities, 2) non-trivial extension of Algorithm \ref{alg:agm} to solve the new problem that is more complex than the APSP problem, and 3) a better method to utilize the artificial graph for integer edge costs resulting in an improved time bound for not only our new algorithm, but also an existing algorithm for solving the APSP problem.

Solving the new APSP-AF problem on other types of graphs (e.g. undirected, real edge costs, etc) as well as finding efficient algorithms for the single source version of the problem remain on the agenda for future research.

\end{document}